\newcommand{\MIPorig}{\textsc{Ilp}_{\mathrm{cb}}}
\newcommand{\MIPalt}{\textsc{Ilp}_{\mathrm{cs}}}
\newcommand{\LPorig}{\textsc{Lp}_{\mathrm{cb}}}
\newcommand{\LPalt}{\textsc{Lp}_{\mathrm{cs}}}
\definecolor{lightgray}{gray}{.85}
\newcommand{\ccg}{\cellcolor{lightgray}}
\newtheorem{theorem}{Theorem}[section]
\newtheorem{lemma}[theorem]{Lemma}
\newenvironment{proof}[1][Proof]{\begin{trivlist}
\item[\hskip \labelsep {\bfseries #1}]}{\end{trivlist}}
\newcommand{\qed}{\nobreak \ifvmode \relax \else
      \ifdim\lastskip<1.5em \hskip-\lastskip
      \hskip1.5em plus0em minus0.5em \fi \nobreak
      \vrule height0.75em width0.5em depth0.25em\fi}
\begin{document}

\title{Computational Performance Evaluation of Two Integer Linear Programming Models for the Minimum Common String Partition Problem}

\author{Christian Blum$^{1,2}$ and G{\"u}nther R.~Raidl$^{3}$  \\
~\\
$^1$Department of Computer Science and Artificial Intelligence\\ 
University of the Basque Country UPV/EHU, San Sebastian, Spain \\
{\sf christian.blum@ehu.es}\\
~\\
$^2$IKERBASQUE\\
Basque Foundation for Science, Bilbao, Spain\\
~\\
$^3$Institute of Computer Graphics and Algorithms\\
Vienna University of Technology, Vienna, Austria \\
{\sf raidl@ads.tuwien.ac.at}}

\date{}

\maketitle

\begin{abstract}
In the minimum common string partition (MCSP) problem two related input
strings are given. ``Related'' refers to the property that both strings
consist of the same set of letters appearing the same number
of times in each of the two strings. The MCSP seeks a minimum
cardinality partitioning of one string into non-overlapping substrings
that is also a valid partitioning for the second string. This problem
has applications in bioinformatics e.g.\ in analyzing related DNA or
protein sequences.
For strings with lengths less than about 1000 letters, a previously
published integer linear programming (ILP) formulation yields, 
when solved with a state-of-the-art solver such as CPLEX,  satisfactory results. 
In this work, we propose a new, alternative ILP model that is compared 
to the former one. While a polyhedral study shows the linear programming
relaxations of the two models to be equally strong, a comprehensive
experimental comparison using real-world as well as artificially
created benchmark instances indicates substantial computational advantages
of the new formulation.
\end{abstract}

\section{Introduction}
\label{intro}

String problems related to DNA and/or protein sequences are abundant in
bioinformatics. Well-known examples include the longest common
subsequence problem and its variants~\cite{HsuDu84lcs,SmiWat81:jmb}, the
shortest common supersequence problem~\cite{Gal2012:plosone}, and string
consensus problems such as the \emph{far from most string} problem and
the \emph{close to most string} problem~\cite{MouBabMon12a,MenOliPar05}.
Many of these problems are strongly \emph{NP}-hard~\cite{GJ79} and also computationally
very challenging.

This work deals with a string problem which is known as the \emph{minimum common string partition} (MCSP) problem. The MCSP problem can technically be described as follows. Given are two \emph{related} input strings $s^1$ and $s^2$ which are both of length $n$ over a finite alphabet $\Sigma$. The term \emph{related} refers to the fact that each letter appears the same number of times in each of the two input strings. Note that being related implies that $s^1$ and $s^2$ have the same length. A valid solution to the MCSP problem is obtained by partitioning $s^1$ (resp.~$s^2$) into a set $P^1$ (resp.~$P^2$) of non-overlapping substrings such that $P^1 = P^2$. The optimization goal consists in finding a valid solution such that $|P^1| = |P^2|$ is minimal.

Consider the following example. Given are sequences $s^1 = \mbox{\bf
AGACTG}$ and $s^2 = \mbox{\bf ACTAGG}$. Obviously, $s^1$ and $s^2$ are
related because {\bf A} and ${\bf G}$ appear twice in both input
strings, while {\bf C} and {\bf T} appear once. A trivial valid solution
can be obtained by partitioning both strings into substrings of length
one, that is, $P^1 = P^2 = \{\mbox{\bf A}, \mbox{\bf A}, \mbox{\bf C},
\mbox{\bf T}, \mbox{\bf G}, \mbox{\bf G}\}$. The objective 
value of this solution is six. However, the optimal solution, with
objective value three, is $P^1 = P^2 = \{\mbox{\bf ACT}, \mbox{\bf AG}, \mbox{\bf G}\}$.

The MCSP problem has applications, for example, in the bioinformatics field. Chen et al.~\cite{CheEtAl05:journal} point out that the MCSP problem is closely related to the problem of sorting by reversals with duplicates, a key problem in genome rearrangement. 

\subsection{History of Research for the MCSP Problem}

The original definition of the MCSP problem by Chen et
al.~\cite{CheEtAl05:journal} was inspired by computational problems
arising in the context of genome rearrangement such as:
May a given DNA string possibly be obtained by reordering subsequences 
of another DNA string? 
In the meanwhile, the general version of the problem was shown to be
\emph{NP}-hard~\cite{Goldstein2004}. Other papers concerning problem
hardness consider problem variants such as, for example, the $k$-MCSP
problem in which each letter occurs at most $k$ times in each input
string. The 2-MCSP problem was shown to be APX-hard
in~\cite{Goldstein2004}. Jiang et al.~\cite{Jiang2010} proved that the
decision version of the MCSP$^c$ problem---where $c$ indicates the size
of the alphabet---is \emph{NP}-complete when $c \geq 2$.

A lot of research has been done concerning the approximability of the
problem. Cormode and Muthukrishnan~\cite{Cormode2002}, for example,
proposed an $O(\log n \log^*n)$-approximation for the \emph{edit distance
with moves} problem, which is a more general case of the MCSP problem.
Other approximation approaches were proposed
in~\cite{Shapira2002,Kolman2007}. Chrobak et al.~\cite{Chrobak2004}
studied a simple greedy approach for the MCSP problem, showing that the
approximation ratio concerning the 2-MCSP problem is 3, and for the
4-MCSP problem the approximation ratio is in $\Omega(\log n)$. In the case
of the general MCSP problem, the approximation ratio lies between
$\Omega(n^{0.43})$ and $O(n^{0.67})$, assuming that the input strings
use an alphabet of size $O(\log n)$. Later Kaplan and
Shafir~\cite{Kaplan2006} improved the lower bound to~$\Omega(n^{0.46})$.
Kolman proposed a modified version of the simple greedy algorithm with
an approximation ratio of $O(k^2)$ for the $k$-MCSP~\cite{Kolman2005}.
Recently, Goldstein and Lewenstein~\cite{Goldstein2011} proposed a
greedy algorithm for the MCSP problem that runs in $O(n)$ time.
He~\cite{He2007} introduced another a greedy algorithm with the aim of
obtaining better average results.

Damaschke~\cite{raey} was the first one to study the fixed-parameter
tractability (FPT) of the problem. Later, Jiang et al.~\cite{Jiang2010}
showed that both the $k$-MCSP and MCSP$^c$ problems admit FPT algorithms
when $k$ and $c$ are constant parameters. Fu et al.~\cite{BinFu2011}
proposed an $O(2^nn^{O(1)})$ time algorithm for the general case and an
$O(n(\log n)^2)$ time algorithm applicable under certain constraints. 

Finally, in recent years researchers have also focused on algorithms for
deriving high quality solutions in practical settings. Ferdous and Sohel
Rahman~\cite{Ferdous2013,FerRah14:arxiv}, for example, developed a
{${\cal MAX}$}-{${\cal MIN}$} {A}nt {S}ystem metaheuristic. Blum et
al.~\cite{BluEtAl14:hm} proposed a probabilistic tree search approach.
Both works applied their algorithm to a range of artificial and real DNA
instances from~\cite{Ferdous2013}. The first integer linear programming
(ILP) model, as well as a heuristic approach on the basis of the
proposed ILP model, was presented in~\cite{BluEtAl15:ejor}. The heuristic is a 2-phase approach which---in the first phase---aims at covering most of the input strings with few but long substrings, while---in the second phase---the so-far uncovered parts of the input strings are covered in the best way possible. 
Experimental results showed that for smaller problem instances with $n < 1000$ 
applying a solver such as
CPLEX\footnote{\url{http://www-01.ibm.com/software/commerce/optimization/cplex-optimizer}}
to the proposed ILP is currently state-of-the-art. 
For larger problem instances, runtimes are typically too high and 
best results are usually obtained by the heuristic from~\cite{BluEtAl15:ejor}. 

\subsection{Contribution of this Work}

In this paper we introduce an alternative ILP model for solving the MCSP
problem. We show that the LP-relaxations of both models are
equally strong from a theoretical point of view.
An extensive experimental comparison with the model
from~\cite{BluEtAl15:ejor} shows, however, that CPLEX is able to derive 
feasible integer solutions much faster with the new model.
Moreover, the results when given the same computation time as for solving the existing ILP model are significantly better. 

\subsection{Organization of the Paper}

The remainder is organized as follows. In Section~\ref{sec:mip}, the ILP model from~\cite{BluEtAl15:ejor} as well as the newly
proposed ILP model are described. A polyhedral comparison of the two
models is performed in Section~\ref{sec:comparison}. 
The experimental evaluation on problem
instances from the related literature as well as on newly generated
problem instances is provided in Section~\ref{sec:experiments}. Finally,
in Section~\ref{sec:conclusions} we draw conclusions and give an outlook
on future work.

\section{ILP Models for the MCSP}
\label{sec:mip}

In the following we first review the existing ILP model for solving the
MCSP as proposed in~\cite{BluEtAl15:ejor}. Subsequently, the new
alternative model is presented.

\subsection{Existing ILP Model}
\label{sec:puremip}

The existing ILP model from~\cite{BluEtAl15:ejor} is based on the notion of \emph{common blocks}. Therefore we will henceforth refer to this model as the \emph{common blocks model}. A common block $b_i$ of input strings $s^1$ and $s^2$ is a
triple $(t_i,k^1_i,k^2_i)$ where $t_i$ is a string which appears
as substring in $s^1$ at position $k^1_i$ and in $s^2$
at position $k^2_i$, with $k^1_i,k^2_i\in\{1,\ldots,n\}$. 
Let the length of a common block $b_i$ be its string's length, i.e., $|t_i|$.
Let us now consider the set $B = \{b_1, \ldots, b_m\}$ of all existing
common blocks of $s^1$ and $s^2$.
Any valid solution $\mathcal{S}$ to the MCSP problem can then be
expressed as a subset of $B$, i.e., $\mathcal{S} \subset
B$, such that:
\begin{enumerate}
  \item $\sum_{b_i \in \mathcal{S}} |t_i| = n$, that is, the sum of the
  lengths of the common blocks in $\mathcal{S}$ is equal to the length
  of the input strings.
  \item For any two common blocks $b_i, b_j \in \mathcal{S}$ it holds that their corresponding strings neither overlap in $s^1$ nor in $s^2$.
\end{enumerate}

The ILP uses for each common block $b_i \in B$ a binary variable $x_i$ indicating its selection in the solution. In other words, if $x_i = 1$, the corresponding common block $b_i$ is selected for the solution. On the other side, if $x_i = 0$, common block $b_i$ is not selected.

\begin{empheq}[box=\shadowbox*]{align}
  (\MIPorig)\quad \text{min} \quad & \sum_{i=1}^{m} x_i &
  \label{eqn:objorig}\\
  \text{s.t.} \quad & \sum_{i\in \{1,\ldots,m \mid k^1_i \le j <
  k^1_i+|t_i|\}} x_i    = 1 &  \qquad & \mbox{for } j=1,\ldots,n \label{eqn:const2} \\
  & \sum_{i\in \{1,\ldots,m \mid k^2_i \le j <
  k^2_i+|t_i|\}} x_i    = 1 &  \qquad & \mbox{for } j=1,\ldots,n \label{eqn:const3} \\
  & x_i \in \{0, 1\} & & \mbox{for } i=1,\ldots,m\nonumber
\end{empheq}

The objective function~\eqref{eqn:objorig} minimizes the number of selected common
blocks. Equations~(\ref{eqn:const2}) ensure that each position
$j=1,\ldots,n$ of string $s^1$ is covered by exactly one selected common block
and selected common blocks also do not overlap.
Equations~(\ref{eqn:const3}) ensure the same with respect to $s^2$. 
Note that equations (\ref{eqn:const2}) (and also (\ref{eqn:const3})) implicitly
guarantee that the sum of the lengths of the selected blocks is $n$ as
\begin{equation*}
\sum_{i=1}^m |t_i| \cdot x_i =
  \sum_{i=1}^m\ \sum_{j=k^1_i}^{k^1_i+|t_i|-1} x_j =
  \sum_{j=1}^n\ \sum_{i\in \{1,\ldots,m \mid k^1_i \le j <
  k^1_i+|t_i|\}} x_i = n. 
\end{equation*}
Finally, note that the number of variables in model $\MIPorig$ is of order $O(n^3)$.

\subsection{An Alternative ILP Model: The Common Substrings Model}

An aspect which the above model does not effectively exploit is the fact that,
frequently, some string appears multiple times at different positions
as substring in $s^1$ and/or $s^2$. 
For example, assume that string \textbf{AC} appears five times in $s^1$ and
four times in $s^2$. Model $\MIPorig$ will then
consider $5\cdot 4=20$ different common blocks, one for each pairing of
an occurrence in $s^1$ and in $s^2$. Especially when the cardinality of the
alphabet is low and $n$ large, it is likely that some smaller strings
appear very often and induce a huge set of possible common blocks $B$.
To overcome this disadvantage, we propose the following alternative
modeling approach.

Let $T$ denote the set of all (unique) strings that appear as substrings
at least once in both $s^1$ and $s^2$. For each $t \in T$, let $Q^1_{t}$
and $Q^2_t$ denote the set of all positions between $1$
and $n$ at which $t$ starts in input strings $s^1$ and $s^2$,
respectively.

We now use binary variables $y_{t,k}^1$ for each $t \in T$, $k \in Q^1_t$, and $y_{t,k}^2$ for each $t \in T$, $k \in Q^2_t$. In case $y_{t,k}^i = 1$, the occurance of string $t \in T$ at position $k \in Q^i_{t}$ in input string $s_i$ is selected for the solution (where $i \in \{1,2\}$). On the other side, if $y_{t,k}^i = 0$, the occurance of string $t \in T$ at position $k \in Q^i_{t}$ in input string $s_i$ is not selected. The new alternative model, henceforth also referred to as the \emph{common substrings model}, can then be expressed as follows.

\begin{empheq}[box=\shadowbox*]{align}
  (\MIPalt) \quad \text{min} \quad & \sum_{t \in T} \sum_{k \in Q^1_t} y_{t,k}^1
  \label{eqn:obj}\\
  \text{s.t.} \quad & \sum_{t \in T}\  \sum_{k \in Q^1_t \mid k\le j < k+|t|} 
  	y_{t,k}^1 = 1  & & \mbox{for } j=1,\ldots,n \label{eqn:const4} \\
  & \sum_{t \in T}\  \sum_{k \in Q^2_t \mid k\le j <
  k+|t|} y_{t,k}^2 = 1  & & \mbox{for } j=1,\ldots,n \label{eqn:const5} \\
  & \sum_{k \in Q^1_t} y_{t,k}^1 = \sum_{k \in Q^2_t} y_{t,k}^2  & &
	  \mbox{for } t \in T \label{eqn:const6} \\ 
  & y_{t,k}^1 \in \{0, 1\} & & \mbox{for } t \in T,\ k \in Q^1_t \nonumber \\ 
  & y_{t,k}^2 \in \{0, 1\} & & \mbox{for } t \in T,\ k \in Q^2_t \nonumber 
\end{empheq}

The objective function \eqref{eqn:obj} counts the number of chosen
substrings; note that $\sum_{t \in T} \sum_{k \in Q^2_t} y_{t,k}^2$
would yield the same value. Equations~(\ref{eqn:const4}) and
(\ref{eqn:const5}) ensure that for each position $j=1,\ldots,n$ of input
string $s^1$ (respectively, $s^2$) exactly one covering substring is
chosen. These equations consider for each position $j$ all substrings
$t\in T$ for which the starting position $k$ is at most $j$ and less
than $k+|t|$. Equations~(\ref{eqn:const6}) ensure that each
string $t\in T$ is chosen the same number of times within $s^1$ and
$s^2$. Similarly as in $\MIPorig$, the requirement that the sum of the
lengths of the selected substrings has to sum up to $n$ follows implicitly from
\eqref{eqn:const4} and \eqref{eqn:const5}.

Concerning the number of variables involved in model $\MIPalt$, the following can be observed. A string of length $n$ has exactly $n^2$ substrings of size greater than zero. In the worst case, input strings $s^1$ and $s^2$ are equal, which means that $2x^2$ variables are generated. Therefore, in the general case, the new model has $O(n^2)$ variables. 

\section{Polyhedral Comparison}
\label{sec:comparison}

We compare the two ILP models by projecting solutions of $\MIPorig$
expressed in terms of variables $x_i$, $i=1,\ldots,m$, into the space of
variables $y^1_{t,k}$, $t\in T,\ k\in Q^1_t$, and $y^2_{t,k}$, $t\in T,\
k\in Q^2_t$, from $\MIPalt$. A corresponding solution is obtained~by
\begin{equation}
y^1_{t,k}=\sum_{i\in\{1,\ldots,m \mid t_i=t \land k^1_i=k\}} x_i
\qquad\mbox{and}\qquad
y^2_{t,k}=\sum_{i\in\{1,\ldots,m \mid t_i=t \land k^2_i=k\}} x_i.
\label{eqn:yx}
\end{equation}
Let $\LPorig$ and $\LPalt$ be the linear programming relaxations of
models $\MIPorig$ and $\MIPalt$, respectively, obtained by relaxing the
integrality conditions. In the following we show
that both models describe the same polyhedron in the space of
$y$-variables and are thus equally strong from a theoretical point.

\begin{lemma}
The polyhedron defined by $\LPorig$ is contained in $\LPalt$.
\end{lemma}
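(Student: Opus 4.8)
The plan is to show that the image under the projection \eqref{eqn:yx} of the feasible region of $\LPorig$ is contained in the feasible region of $\LPalt$; since the excerpt works entirely in the $y$-space, this is exactly what ``contained in'' should mean here. So I would take an arbitrary point $x=(x_1,\dots,x_m)\ge 0$ satisfying the covering equalities \eqref{eqn:const2} and \eqref{eqn:const3}, define $y$ via \eqref{eqn:yx}, and verify that $y$ satisfies all defining constraints of $\LPalt$. Nonnegativity of every $y^1_{t,k}$ and every $y^2_{t,k}$ is immediate because each is a sum of nonnegative $x_i$, so the real content is to check the three families of equalities \eqref{eqn:const4}, \eqref{eqn:const5} and \eqref{eqn:const6}.

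For \eqref{eqn:const4} I would fix a position $j\in\{1,\dots,n\}$ and substitute the definition of $y^1_{t,k}$ into the left-hand side, obtaining a triple sum over all $(t,k,i)$ with $t\in T$, $k\in Q^1_t$, $k\le j<k+|t|$, $t_i=t$ and $k^1_i=k$. The point is that such triples are in bijection with common blocks: a block $b_i$ together with $t=t_i$ and $k=k^1_i$ determines the triple (and then $|t|=|t_i|$), so the composite index set collapses to exactly $\{\,i\mid k^1_i\le j<k^1_i+|t_i|\,\}$. Hence the left-hand side equals $\sum_{i\,:\,k^1_i\le j<k^1_i+|t_i|}x_i$, which is $1$ by \eqref{eqn:const2}. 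Constraint \eqref{eqn:const5} is handled identically, with $s^1$ replaced by $s^2$ and \eqref{eqn:const2} replaced by \eqref{eqn:const3}.

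For the linking equalities \eqref{eqn:const6} I would fix $t\in T$ and sum the definition of $y^1_{t,k}$ over all $k\in Q^1_t$; every common block $b_i$ with $t_i=t$ then contributes to exactly one term, namely the one indexed by its starting position $k^1_i\in Q^1_t$, so $\sum_{k\in Q^1_t}y^1_{t,k}=\sum_{i\,:\,t_i=t}x_i$. The same computation with $Q^2_t$ and $k^2_i$ gives $\sum_{k\in Q^2_t}y^2_{t,k}=\sum_{i\,:\,t_i=t}x_i$, so the two sides of \eqref{eqn:const6} coincide.

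I do not expect a genuine obstacle: the whole argument is reindexing, and the only point needing care is to confirm at each substitution that the composite index set is really in bijection with the index set appearing in the target constraint — i.e. that no common block is double-counted and none is dropped — which is precisely the fact that a common block is specified by its string together with its two starting positions. As a by-product the same reindexing applied to the objective gives $\sum_{t\in T}\sum_{k\in Q^1_t}y^1_{t,k}=\sum_{i=1}^m x_i$, so the projection also preserves objective values; I would record this, as it is convenient for establishing the reverse inclusion.
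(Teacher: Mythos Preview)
Your proposal is correct and follows essentially the same approach as the paper: take a feasible $x$ in $\LPorig$, push it through the projection~\eqref{eqn:yx}, and verify each constraint family of $\LPalt$ by reindexing the resulting sums so that they collapse to the corresponding constraints of $\LPorig$. The paper's proof is terser but structurally identical; your explicit bijection argument and the remark on objective preservation are welcome clarifications rather than a different route. One cosmetic point: the paper also records the upper bound $y^1_{t,k},y^2_{t,k}\le 1$, but this is redundant once \eqref{eqn:const4} and \eqref{eqn:const5} hold with nonnegative summands, so your omission is not a gap.
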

\begin{proof}
We show that for any feasible solution to $\LPorig$, the solution in
terms of the $y$-variables obtained by \eqref{eqn:yx} is also feasible in
$\LPalt$.
For equations~\eqref{eqn:const4} replacing $y^1_{t,k}$ yields
\begin{equation}
\sum_{t \in T}\ \sum_{k \in Q^1_t \mid k\le j < k+|t|}\  
	\sum_{i\in\{1,\ldots,m \mid t_i=t \land k^1_i=k\}} x_i =
	\sum_{i\in\{1,\ldots,m \mid k^1_i \le j < k^1_i+|t^1_i|\}} x_i,
	\label{eqn:colcorr}
\end{equation}
which corresponds to the left side of \eqref{eqn:const2} and is thus always equal to one.
Equations~\eqref{eqn:const5} are correspondingly fulfilled.
For constraints~\eqref{eqn:const6} we obtain for each $t\in T$
\begin{equation*}
\sum_{k\in P^1_t}\ \sum_{i\in\{1,\ldots,m \mid t_i=t \land k^1_i=k\}} x_i = 
\sum_{i\in\{1,\ldots,m \mid t_i=t\}} x_i =
\sum_{k\in P^2_t}\ \sum_{i\in\{1,\ldots,m \mid t_i=t \land k^2_i=k\}} x_i,
\end{equation*}
and they are therefore also always fulfilled.
Last but not least, also $0\le y^1_{t,k} \le 1$ and $0\le y^2_{t,k} \le 1$
trivially hold due to \eqref{eqn:const2} and \eqref{eqn:const3}.
\end{proof}

\begin{lemma}
The polyhedron defined by $\LPalt$ is contained in $\LPorig$.
\end{lemma}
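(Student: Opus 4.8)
The plan is to prove the reverse inclusion in the sense dual to the previous lemma: I would show that \emph{every} feasible point of $\LPalt$ is the image, under the projection \eqref{eqn:yx}, of some feasible point of $\LPorig$. Together with the preceding lemma this yields that $\LPorig$ and $\LPalt$ describe the same polyhedron in the space of $y$-variables. So I would start from an arbitrary feasible solution $(y^1,y^2)$ of $\LPalt$ and try to \emph{reconstruct} a vector $x\in\mathbb{R}^m_{\ge 0}$ that satisfies \eqref{eqn:const2}, \eqref{eqn:const3} and $x_i\le 1$, and that in addition realizes the marginals prescribed by \eqref{eqn:yx}, namely $\sum_{i\mid t_i=t,\ k^1_i=k} x_i = y^1_{t,k}$ for all $t\in T,\ k\in Q^1_t$, and analogously for $s^2$.

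The crucial observation is that this reconstruction decouples over the strings $t\in T$. Fix $t$; since $B$ contains \emph{all} common blocks, the indices $i$ with $t_i=t$ are in bijection with the pairs $(k,\ell)\in Q^1_t\times Q^2_t$. Writing $x^t_{k,\ell}$ for the corresponding variable, the requirement is to find nonnegative numbers $x^t_{k,\ell}$ with row sums $y^1_{t,k}$ and column sums $y^2_{t,\ell}$. This is exactly a (fractional) transportation feasibility problem, and constraint~\eqref{eqn:const6} is precisely the balance condition $\sum_{k\in Q^1_t} y^1_{t,k}=\sum_{\ell\in Q^2_t} y^2_{t,\ell}=:\sigma_t\ge 0$ that makes it solvable. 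I would just exhibit an explicit balanced plan: take $x^t_{k,\ell}=y^1_{t,k}\,y^2_{t,\ell}/\sigma_t$ when $\sigma_t>0$, and $x^t_{k,\ell}=0$ when $\sigma_t=0$ (in which case all marginals vanish). Collecting these over all $t$ defines the candidate $x$.

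It then remains to check feasibility of $x$ for $\LPorig$, which I expect to be a straightforward rerun of the computations in the proof of the preceding lemma. For a fixed position $j$, grouping the common blocks covering $j$ in $s^1$ by their underlying string $t$ and using the row-sum property reduces $\sum_{i\mid k^1_i\le j<k^1_i+|t_i|} x_i$ to $\sum_{t\in T}\sum_{k\in Q^1_t\mid k\le j<k+|t|} y^1_{t,k}$, which equals one by \eqref{eqn:const4}; hence \eqref{eqn:const2} holds, and \eqref{eqn:const3} follows symmetrically from \eqref{eqn:const5}. Nonnegativity of $x$ is immediate, and $x_i\le 1$ because each $x^t_{k,\ell}$ is bounded by its row marginal $y^1_{t,k}$, which is at most one since it is a single nonnegative summand of \eqref{eqn:const4} taken at $j=k$. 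As a byproduct one also gets $\sum_{i} x_i=\sum_{t\in T}\sum_{k\in Q^1_t} y^1_{t,k}$, so the two relaxations yield identical objective bounds.

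The only non-mechanical point — hence the main obstacle — is recognizing that inverting \eqref{eqn:yx} amounts to a transportation feasibility problem per string and that \eqref{eqn:const6} supplies exactly the supply/demand balance needed for a nonnegative solution to exist; once the explicit plan above is written down, every remaining verification mirrors the previous lemma.
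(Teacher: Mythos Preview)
Your argument is correct and follows the same overall strategy as the paper: start from a feasible $y$-solution of $\LPalt$, decouple the reconstruction of $x$ over the strings $t\in T$, and invoke the balance condition~\eqref{eqn:const6} to guarantee that suitable nonnegative values $x_i$ with the prescribed marginals exist; feasibility of~\eqref{eqn:const2} and~\eqref{eqn:const3} then follows from~\eqref{eqn:const4} and~\eqref{eqn:const5} via the grouping you describe. The one noteworthy difference is in how the per-$t$ reconstruction is carried out: the paper merely asserts that values can be assigned ``in an iterative, greedy fashion'' and that feasibility follows from~\eqref{eqn:const6} and the fact that an $x_i$ exists for every pair in $Q^1_t\times Q^2_t$, whereas you explicitly identify the subproblem as a transportation feasibility instance and write down the closed-form rank-one solution $x^t_{k,\ell}=y^1_{t,k}\,y^2_{t,\ell}/\sigma_t$. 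Your version is more self-contained and also makes the bound $x_i\le 1$ transparent, which the paper leaves implicit.
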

\begin{proof}
Due to the correspondence~\eqref{eqn:colcorr},
equations~\eqref{eqn:const2} can be written in terms of the
$y$-variables and therefore also hold for any feasible solution of $\LPalt$.
Correspondingly, equations~\eqref{eqn:const3} are always fulfilled for
any solution of $\LPalt$.
If one is interested in a specific solution in terms of the
$x$-variables for a feasible solution expressed by $y$-variables, 
it can be easily derived by considering
each $t\in T$ and assigning values to variables $x_i$ with
$i\in\{1,\ldots,m \mid t_i=t\}$ in an iterative, greedy fashion 
so that relations~\eqref{eqn:yx} are fulfilled for any $k^1_i$ and $k^2_i$. 
A feasible assignment of such values must always exist
as an individual $x_i$ variable exists for each possible pair 
of positions $Q^1_t$ in $s^1$ and positions $Q^2_t$ in $s^2$,
due to constraints~\eqref{eqn:const6}, and the variable domains.
\end{proof}

From the above results, we can directly conclude the following.
\begin{theorem}
$\LPorig$ corresponds to $\LPalt$ when projected into the domain of
$y$-variables, and therefore $\MIPorig$ and $\MIPalt$ yield the same
LP-values and are equally strong.
\end{theorem}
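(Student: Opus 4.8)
The theorem is an immediate consequence of the two preceding lemmas, so my plan is essentially to assemble them. The plan is to invoke the two containment results in sequence: the first lemma gives that the polyhedron defined by $\LPorig$ (after applying the projection \eqref{eqn:yx}) is contained in the polyhedron defined by $\LPalt$, and the second lemma gives the reverse containment. Taken together, these two inclusions establish that the image of the feasible region of $\LPorig$ under \eqref{eqn:yx} coincides exactly with the feasible region of $\LPalt$ in the space of $y$-variables. Since containment in both directions is set equality, no further work is needed on the polyhedral side.

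From there I would argue that equality of the projected polyhedra forces equality of the optimal LP objective values. The key observation is that both objective functions are the same linear functional once expressed in $y$-variables: the objective of $\MIPorig$ is $\sum_{i=1}^m x_i$, and under \eqref{eqn:yx} we have $\sum_{t\in T}\sum_{k\in Q^1_t} y^1_{t,k} = \sum_{t\in T}\sum_{k\in Q^1_t}\sum_{i\in\{1,\ldots,m\mid t_i=t\land k^1_i=k\}} x_i = \sum_{i=1}^m x_i$, which is precisely the objective of $\MIPalt$ as given in \eqref{eqn:obj}. So minimizing the $\MIPalt$-objective over $\LPalt$ and minimizing the $\MIPorig$-objective over $\LPorig$ amount to minimizing one and the same linear function over one and the same set, hence the two LP optima agree. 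Because the LP relaxation bounds coincide, the two ILP formulations are equally strong in the usual sense (same root relaxation value, hence the same quality of bound available to a branch-and-bound solver before branching).

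The one point that requires a little care — and is the closest thing to an obstacle — is making sure the projection \eqref{eqn:yx} is used consistently in both directions and that the backward map is genuinely well defined on the polyhedron level. The second lemma already handles this: it notes that from any feasible $y$-vector one can reconstruct a feasible $x$-vector satisfying \eqref{eqn:yx}, using the fact that an individual $x_i$ exists for every pair of positions in $Q^1_t\times Q^2_t$ together with the balancing constraints \eqref{eqn:const6}. So the projection is surjective onto $\LPalt$ and its image lies inside $\LPalt$ by the first lemma; I would simply cite these facts rather than re-prove them. In short, the proof is: first lemma $+$ second lemma $\Rightarrow$ equal projected polyhedra; equal polyhedra $+$ identical objective functional under \eqref{eqn:yx} $\Rightarrow$ equal LP values $\Rightarrow$ equally strong formulations. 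I do not anticipate any genuinely hard step; it is a short synthesis argument.
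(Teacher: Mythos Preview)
Your proposal is correct and takes essentially the same approach as the paper, which simply states that the theorem follows directly from the two preceding lemmas without writing out a proof. Your additional step of explicitly verifying that the objective $\sum_{i=1}^m x_i$ equals $\sum_{t\in T}\sum_{k\in Q^1_t} y^1_{t,k}$ under the projection~\eqref{eqn:yx} makes the argument more complete than what the paper itself provides.
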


\section{Experimental Evaluation}
\label{sec:experiments}

Both $\MIPorig$ and $\MIPalt$ were implemented using GCC~4.7.3
and IBM ILOG CPLEX~V12.1. The experimental results were obtained on a
cluster of PCs with 2933~MHz Intel(R)~Xeon(R)~5670 CPUs having 12 nuclei 
and 32GB~RAM. Moreover, CPLEX was configured for single-threaded execution.

\subsection{Benchmark Instances}

Two different benchmark sets were used for the experimental evaluation.
The first one was introduced by Ferdous and Sohel Rahman
in~\cite{Ferdous2013} for the evaluation of their ant
colony optimization approach. This set contains in total 30 artificial
instances and 15 real-life instances consisting of DNA sequences, that is, $|\Sigma|=4$.
Remember, in this context, that each problem instance consists of two
related input strings. Moreover, the benchmark set consists of four
subsets of instances. The first subset (henceforth labelled
\textsc{Group1}) consists of 10 artificial instances in which the input
strings have lengths up to 200. The second subset (\textsc{Group2})
consists of 10 artificial instances with input string lengths in
$(200,400]$. In the third subset (\textsc{Group3}) the input strings of
the 10 artificial instances have lengths in $(400,600]$. Finally,
the fourth subset (\textsc{Real}) consists of 15 real-life instances of
various lengths in $[200,600]$. The second benchmark set that we used
is new. It consists of 10 uniformly randomly generated instances for each
combination of $n \in \{100, 200, \ldots, 1000\}$ and alphabet size $|\Sigma| \in \{4,
12, 20\}$. In total, this set thus consists of 300 benchmark instances.

\subsection{Results for the instances from~Ferdous and Sohel Rahman}

The results for the four subsets of instances from the benchmark set by
Ferdous and Sohel Rahman ~\cite{Ferdous2013} are shown in
Tables~\ref{tab:results:group1}-\ref{tab:results:real}, in terms of one
table per instance subset. The structure of these tables is as follows.
The first and second columns provide the instance identifiers and the input string length, respectively. Then the results of
$\MIPorig$ and $\MIPalt$ are shown by means of five columns each.
The first column provides the objective values of the best solutions
found within a limit of 3600~CPU seconds. In case optimality of the
corresponding solution was proven by CPLEX, the value is marked by an
asterisk. The second column provides computation times in the form X/Y, 
where X is the time at which CPLEX was able to find the first valid integer
solution, and Y the time at which CPLEX found the best
(possibly optimal) solution within the 3600s limit. The third 
column shows optimality gaps, which are the relative differences in
percent between the values of the best feasible solutions and the lower
bounds at the times of stopping the runs. The fourth column provides LP
gaps, i.e.,
the relative differences between the LP relaxation values and the
best (possibly optimal) integer solution values.\footnote{Note that we
confirmed, in this context, that in all cases the values of the LP
relaxations concerning $\MIPorig$ and $\MIPalt$ were equal.} Finally,
the last column lists the numbers of variables of the ILP models.
The best result for each problem instance
is marked by a grey background, and the last row of each table provides
averages over the whole table.

\begin{table}[t!]
\caption{Results for the 10 instances of \textsc{Group1}.}
\label{tab:results:group1}
\centering
\scalebox{0.8}{
\tabcolsep1.2ex
\begin{tabular}{rrrrrrrrrrrrrr} \hline\hline
{\bf id} & {\bf $n$} & $\;\;\;$ & \multicolumn{4}{c}{$\MIPorig$}                           & $\;\;\;$ & \multicolumn{4}{c}{$\MIPalt$} \\ \cline{4-8} \cline{10-14}
         &         & & {\bf value} & {\bf time (s)} & {\bf gap} & {\bf LP gap} & {\bf \# vars} &          & {\bf value} & {\bf time (s)} & {\bf gap} & {\bf LP gap} & {\bf \# vars} \\ \hline
1         & 114 && $^*$\ccg41      &       0/1       &       0.0\% & 3.3\%  & 4299  && $^*$\ccg41  &    0/0  &     0.0\% & 3.3\% & 781 \\
2         & 137 && $^*$\ccg47      &       1/2       &       0.0\% & 3.6\%  & 6211  && $^*$\ccg47  &    0/0  &     0.0\% & 3.6\% & 928	\\
3         & 158 && $^*$\ccg52      &       2/34      &       0.0\% & 5.7\%  & 8439  && $^*$\ccg52  &    0/14 &     0.0\% & 5.7\% & 1172	\\
4         & 113 && $^*$\ccg41      &       0/1       &       0.0\% & 2.0\%  & 4299  && $^*$\ccg41  &    0/1  &     0.0\% & 2.0\% & 736	\\
5         & 119 && $^*$\ccg40      &       1/1       &       0.0\% & 3.0\%  & 4718  && $^*$\ccg40  &    0/1  &     0.0\% & 3.0\% & 833	\\
6         & 115 && $^*$\ccg40      &       0/3       &       0.0\% & 4.2\%  & 4435  && $^*$\ccg40  &    0/1  &     0.0\% & 4.2\% & 765	\\
7         & 162 && $^*$\ccg55      &       2/38      &       0.0\% & 2.9\%  & 8687  && $^*$\ccg55  &    0/18 &     0.0\% & 2.9\% & 1159	\\
8         & 123 && $^*$\ccg43      &       1/2       &       0.0\% & 3.2\%  & 4995  && $^*$\ccg43  &    0/2  &     0.0\% & 3.2\% & 816	\\
9         & 118 && $^*$\ccg42      &       1/2       &       0.0\% & 3.7\%  & 4995  && $^*$\ccg42  &    0/1  &     0.0\% & 3.7\% & 767	\\
10        & 170 && $^*$\ccg54      &       1/51      &       0.0\% & 3.7\%  & 9699  && $^*$\ccg54  &    0/16 &     0.0\% & 3.7\% & 1254 \\ \hline
{\bf avg.}& && \ccg45.5        &       1/14      &       0.0\% & 3.5\%  & 6029.3  && \ccg45.5  &    0/5  &     0.0\% & 3.5\% & 921.1   \\
\hline\hline
\end{tabular}}
\end{table}

\begin{table}[t!]
\caption{Results for the 10 instances of \textsc{Group2}.}
\label{tab:results:group2}
\centering
\scalebox{0.8}{
\tabcolsep1.2ex
\begin{tabular}{rrrrrrrrrrrrrr} \hline\hline
{\bf id} & {\bf $n$}  & $\;\;\;$ & \multicolumn{4}{c}{$\MIPorig$}                           & $\;\;\;$ & \multicolumn{4}{c}{$\MIPalt$} \\ \cline{4-8} \cline{10-14}
         &          & & {\bf value} & {\bf time (s)} & {\bf gap} & {\bf LP gap} & {\bf \# vars} &          & {\bf value} & {\bf time (s)} & {\bf gap} & {\bf LP gap} & {\bf \# vars} \\ \hline
1         & 337 && \ccg98  &	50/2067	    &	2.9\% & 4.1\% & 37743   && \ccg98     &     1/1218 & 2.2\%  & 4.1\% &	2740   \\
2         & 376 && 106	   &	80/1046     &	7.5\% & 7.8\% & 47174   && \ccg103    &     1/2554 & 3.6\%  & 5.2\% &	3191   \\
3         & 334 && 97	   &	35/1220	    &	2.7\% & 3.7\% & 36979   && $^*$\ccg96 &   1/523    & 0.0\%  & 2.7\% &	2776   \\
4         & 351 && 102	   & 	48/891	    &	4.9\% & 5.6\% & 40960   && \ccg100    &    1/470   & 2.2\%  & 3.7\% &	2914   \\
5         & 398 && 116	   &	83/2703     &	6.7\% & 7.5\% & 52697   && \ccg114    &    1/903   & 4.5\%  & 5.9\% &	3291   \\
6         & 327 && \ccg93  &	39/1476	    &	5.6\% & 6.5\% & 35650   && 94         &    1/269   & 6.2\%  & 7.5\% &	2694   \\
7         & 303 && 88	   &	31/3107	    &	6.0\% & 7.7\% & 30839   && \ccg87     &    1/1358  & 4.2\%  & 6.7\% &	2494   \\
8         & 358 && \ccg104 &	61/3248     &	5.1\% & 6.2\% & 42668   && \ccg104    &    1/72    & 5.7\%  & 6.2\% &	2954   \\
9         & 360 && 104	   &	49/1563	    &	5.2\% & 6.1\% & 42998   && \ccg103    &    1/162   & 4.2\%  & 5.2\% &	2924   \\
10        & 306 && 89	   &	27/1397	    &	3.6\% & 4.9\% & 31169   && $^*$\ccg88 &  1/434     & 0.0\%  & 3.8\% &	2423   \\ \hline
{\bf avg.}& && 99.7	   &	50/1872     &   5.0\% & 6.0\% & 39887.7 && \ccg98.7   & 1/796      & 3.3\%  & 5.1\% &  2840.1  \\
\hline\hline
\end{tabular}}
\end{table}

\begin{table}[t!]
\caption{Results for the 10 instances of \textsc{Group3}.}
\label{tab:results:group3}
\centering
\scalebox{0.8}{
\tabcolsep1.0ex
\begin{tabular}{rrrrrrrrrrrrrr} \hline\hline
{\bf id} & {\bf $n$}  & $\;\;\;$ & \multicolumn{4}{c}{$\MIPorig$}                           & $\;\;\;$ & \multicolumn{4}{c}{$\MIPalt$} \\ \cline{4-8} \cline{10-14}
         &          & & {\bf value} & {\bf time (s)} & {\bf gap} & {\bf LP gap} & {\bf \# vars} &          & {\bf value} & {\bf time (s)} & {\bf gap} & {\bf LP gap} & {\bf \# vars} \\ \hline
1         & 577 && 155       &	 333/858  &	 7.5\%  & 7.7\%  & 110973   && \ccg154   &   2/1015  &  6.4\%  & 6.5\% & 5230 \\
2         & 556 && 155       &	 345/693  &	 7.7\%  & 7.7\%  & 102670   && \ccg152   &   2/972   &  5.3\%  & 5.9\% & 4849 \\
3         & 599 && 166       &	 462/2063 &	 8.5\%  & 8.6\%  & 119287   && \ccg160   &   2/643   &  4.8\%  & 5.2\% & 5339 \\
4         & 588 && \ccg159   &	 458/976  &	 6.9\%  & 7.1\%  & 114975   && \ccg159   &   2/1783  &  6.4\%  & 7.1\% & 5251 \\
5         & 547 &&     150   &	 279/682  &	 9.7\%  & 9.9\%  & 99775    && \ccg147   &   3/237   &  7.6\%  & 8.1\% & 4917 \\
6         & 517 &&     147   &	 239/573  &	 9.1\%  & 9.2\%  & 88839    && \ccg143   &   2/621   &  6.0\%  & 6.7\% & 4441 \\
7         & 535 &&     149   &	 253/620  &	 9.8\%  & 10.0\% & 95765    && \ccg145   &   2/1572  &  6.7\%  & 7.5\% & 4734 \\
8         & 542 &&     151   &	 312/3591 &	 6.7\%  & 6.9\%  & 97400    && \ccg149   &   1/1092  &  5.0\%  & 5.7\% & 4691 \\
9         & 559 &&     158   &	 352/1022 &	 10.9\% & 11.1\% & 104186   && \ccg148   &   2/3418  &  4.2\%  & 5.1\% & 5009 \\
10        & 543 &&     148   &	 343/1334 &	 9.1\%  & 9.5\%  & 98237    && \ccg145   &   2/3316  &  6.7\%  & 8.2\% & 4823 \\ \hline
{\bf avg.}& &&     153.8 &        338/1241 &      8.6\%  & 8.8\%  & 103211.0 && \ccg150.2 &   2/1467 &  5.9\%  & 6.6\% & 4928.4 \\
\hline\hline
\end{tabular}}
\end{table}

\begin{table}[t!]
\caption{Results for the 15 instances of set \textsc{Real}.}
\label{tab:results:real}
\centering
\scalebox{0.8}{
\tabcolsep1.2ex
\begin{tabular}{rrrrrrrrrrrrrr} \hline\hline
{\bf id} & {\bf $n$}  & $\;\;\;$ & \multicolumn{4}{c}{$\MIPorig$}                           & $\;\;\;$ & \multicolumn{4}{c}{$\MIPalt$} \\ \cline{4-8} \cline{10-14}
         &          & & {\bf value} & {\bf time (s)} & {\bf gap} & {\bf LP gap} & {\bf \# vars} &          & {\bf value} & {\bf time (s)} & {\bf gap} & {\bf LP gap} & {\bf \# vars} \\ \hline
1          & 252 && $^*$\ccg78 &	   14/968   & 0.0\% & 3.9\%  & 22799     &&  $^*$\ccg78  &    0/232   & 0.0\%   & 3.9\% & 1966  \\
2          & 487 &&     139    &	   196/441  & 9.2\% & 9.3\%  & 80523     &&      \ccg134 &    1/988   & 5.2\%   & 5.9\% & 4330  \\
3          & 363 &&     104    &	   61/3575  & 5.6\% & 6.4\%  & 45869     &&      \ccg102 &    1/115   & 3.9\%   & 4.6\% & 3052  \\
4          & 513 &&     144    &	   301/1353 & 6.5\% & 6.6\%  & 91663     &&      \ccg141 &    1/227   & 4.3\%   & 4.7\% & 4467  \\
5          & 559 &&     150    &	   379/1998 & 7.9\% & 8.2\%  & 108866    &&      \ccg148 &    2/3230  & 6.2\%   & 7.0\% & 5068 \\
6          & 451 &&     128    &	   170/3584 & 6.5\% & 7.0\%  & 70655     &&      \ccg124 &    1/1392  & 3.0\%   & 4.0\% & 3836  \\
7          & 458 &&     121    &	   180/1814 & 6.9\% & 7.6\%  & 73502     &&      \ccg119 &    1/2729  & 4.3\%   & 6.1\% & 4187  \\
8          & 433 &&     116    &	   127/3268 & 6.8\% & 7.6\%  & 65560     &&      \ccg115 &    1/607   & 5.5\%   & 6.8\% & 3879  \\
9          & 468 &&     131    &	   191/358  & 8.8\% & 8.9\%  & 75833     &&      \ccg127 &    1/844   & 5.2\%   & 6.1\% & 4130  \\
10         & 450 &&     130    &	   144/3429 & 6.1\% & 6.7\%  & 69560     &&      \ccg127 &    1/2669  & 3.1\%   & 4.5\% & 3876  \\
11         & 400 &&     110    &	   114/3591 & 4.8\% & 5.6\%  & 56160     &&      \ccg109 &    1/2309  & 3.3\%   & 4.8\% & 3546  \\
12         & 449 &&     126    &	   178/651  & 9.8\% & 10.2\% & 70861     &&      \ccg122 &    1/562   & 6.3\%   & 7.2\% & 3981  \\
13         & 579 &&     157    &	   469/2236 & 7.1\% & 7.9\%  & 115810    &&      \ccg155 &    2/835   & 6.1\%   & 6.7\% & 5251 \\
14         & 458 &&     130    &	   161/3099 & 6.7\% & 7.2\%  & 73449     &&      \ccg129 &    1/581   & 5.5\%   & 6.5\% & 3905  \\
15         & 510 &&     139    &	   295/1430 & 7.7\% & 7.9\%  & 91060     &&      \ccg135 &    2/712   & 4.4\%   & 5.2\% & 4556  \\ \hline
{\bf avg.} & &&     126.9  &         212/2120 & 6.7\% & 7.4\%  & 74163.9   &&    \ccg124.3 &    1/1202  & 4.4\%   & 5.6\% &  4002.0 \\
\hline\hline
\end{tabular}}
\end{table}

The following observations can be made. First, apart from the instances
of \textsc{Group1} which are all solved with both models to optimality, the
results for subsets \textsc{Group2}, \textsc{Group3} and \textsc{Real}
are clearly in favor of model $\MIPalt$. Only in one out of 35 cases
(leaving \textsc{Group1} aside) a better result is obtained with
$\MIPorig$, and in further four cases the results obtained with
$\MIPalt$ are matched. In all remaining cases the solutions obtained
with $\MIPalt$ are better than those obtained with $\MIPorig$. This
observation is confirmed by a study of the optimality gaps. They are
significantly smaller for $\MIPalt$ than for $\MIPorig$. One of the main reasons for the superiority of model $\MIPalt$ over $\MIPorig$ is certainly the difference in the number of the variables. For the instance of \textsc{Group1}, $\MIPorig$ needs, on average, $\approx6.5$ times more variables than $\MIPalt$. This factor seems to grow with growing instance size. Concerning instances of \textsc{Group2}, $\MIPorig$ requires, on average, $\approx 14.0$ times more variables. The corresponding number for \textsc{Group3} is $\approx 20.9$. Another reason for the advantage of $\MIPalt$ over $\MIPorig$ is that symmetries are avoided. Finally, a
last observation concerns the computation times: the first feasible
integer solution is found for $\MIPalt$, on average, in about $0.7\%$ of
the time that is needed in the case of $\MIPorig$. 

\subsection{Results for the New Instance Set}

The results for the new set of problem instances are presented in
Table~\ref{tab:results:new}. Each line provides the results of both
$\MIPorig$ and $\MIPalt$ averaged over the 10 instances for a
combination between $n$ and $|\Sigma|$. The results are presented for
each ILP model by means of six table columns. The first five represent the
same information as was provided in the case of the first benchmark set.
An additional sixth column (with heading {\bf \# opt}) indicates for
each row how many (out of 10) instances were solved to optimality.
The additional last table column (with heading {\bf Impr.~in $\%$})
indicates the average improvement in solution quality of $\MIPalt$ over
$\MIPorig$. The results permit, basically, to draw the same conclusions
as in the case of the results for the instance set treated in the
previous subsection. The application of CPLEX to $\MIPalt$ outperforms
the application of CPLEX to $\MIPorig$ both in final solution quality
and in the computation time needed to find the first feasible integer
solution. These differences in results become more pronounced with
increasing input string length and with decreasing alphabet size. In the
case of $|\Sigma|=4$, for example, the solutions provided by $\MIPalt$
are on average $5.0\%$ better than those provided by $\MIPorig$. The
superiority of $\MIPalt$ over $\MIPorig$ is also indicated by the number
of instances that were solved to optimality: 160 out of 300 in the case
of $\MIPorig$, and 183 out of 300 in the case of $\MIPalt$.

\begin{sidewaystable}
\centering
\caption{Average results for the 300 instances of the newly generated benchmark set.}
\label{tab:results:new}
\scalebox{0.9}{
\tabcolsep1.2ex
\begin{tabular}{rr r rrrrr r rrrrrrrrr} \hline\hline
$n$ & $|\Sigma|$ & $\;\;\;$ & \multicolumn{5}{c}{$\MIPorig$} & $\;\;\;$
& \multicolumn{5}{c}{$\MIPalt$} & & & & {\bf Impr.} \\ \cline{4-9} \cline{11-16}
& & & {\bf value} & {\bf time (s)} & {\bf \# opt} & {\bf gap} & {\bf LP
gap} & {\bf \# vars} &          & {\bf value} & {\bf time (s)} & {\bf \#
opt} & {\bf gap} & {\bf LP gap} & {\bf \# vars} && {\bf in $\%$} \\ \hline
     & 4  &&  \ccg37.3  & 0/0	  &    10/10  & 0.0\%  & 2.8\%  & 3425.6     &&	  \ccg37.3  & 0/0  	 & 10/10 &  0.0\%  & 2.8\%  & 649.7  && 0.0\% \\
100  & 12 &&  \ccg68.5  & 0/0	  &    10/10  & 0.0\%  & 0.2\%  & 993.3      &&	  \ccg68.5  & 0/0  	 & 10/10 &  0.0\%  & 0.2\%  & 324.0  && 0.0\% \\
     & 20 &&  \ccg79.8  & 0/0	  &    10/10  & 0.0\%  & 0.0\%  & 622.4      &&	  \ccg79.8  & 0/0  	 & 10/10 &  0.0\%  & 0.0\%  & 264.2  && 0.0\% \\ \hline
     & 4  &&  \ccg63.5  & 3/101   &    10/10  & 0.0\%  & 3.5\%  & 13498.5    &&	  \ccg63.5  & 0/34 	 & 10/10 &  0.0\%  & 3.5\%  & 1473.8 && 0.0\% \\
200  & 12 &&  \ccg119.2 & 0/0	  &    10/10  & 0.0\%  & 0.5\%  & 3824.6     &&	  \ccg119.2 & 0/0  	 & 10/10 &  0.0\%  & 0.5\%  & 762.8  && 0.0\% \\
     & 20 &&  \ccg146.2 & 0/0	  &    10/10  & 0.0\%  & 0.0\%  & 2301.1     &&	  \ccg146.2 & 0/0  	 & 10/10 &  0.0\%  & 0.0\%  & 591.6  && 0.0\% \\ \hline
     & 4  &&  88.5  & 21/2358	  &    1/10   & 3.2\%  & 4.7\%  & 30398.5    &&	  \ccg88.1  & 0/448	 & 4/10  &  1.9\%  & 4.3\%  & 2412.5 && 0.5\% \\
300  & 12 &&  \ccg165.3 & 1/3	  &    10/10  & 0.0\%  & 0.8\%  & 8478.6     &&	  \ccg165.3 & 0/1  	 & 10/10 &  0.0\%  & 0.8\%  & 1249.1 && 0.0\% \\
     & 20 &&  \ccg206.7 & 0/0	  &    10/10  & 0.0\%  & 0.02\% & 5029.6     &&	  \ccg206.7 & 0/0  	 & 10/10 &  0.0\%  & 0.02\% & 967.0  && 0.0\% \\ \hline
     & 4  &&  115.5 & 89/2159	  &    0/10   & 6.7\%  & 7.2\%  & 53658.5    &&	  \ccg113.0 & 0/1277	 & 0/10  &  3.9\%  & 5.2\%  & 3369.8 && 2.2\% \\
400  & 12 &&  \ccg208.9 & 3/47	  &    10/10  & 0.0\%  & 0.9\%  & 14887.2    &&	  \ccg208.9 & 0/3  	 & 10/10 &  0.0\%  & 0.9\%  & 1742.1 && 0.0\% \\
     & 20 &&  \ccg261.5 & 1/1	  &    10/10  & 0.0\%  & 0.1\%  & 8932.0     &&	  \ccg261.5 & 0/0  	 & 10/10 &  0.0\%  & 0.1\%  & 1366.8 && 0.0\% \\ \hline
     & 4  &&  139.3 & 192/870	  &    0/10   & 9.1\%  & 9.3\%  & 84004.2    &&	  \ccg134.7 & 0/793	 & 0/10  &  5.5\%  & 6.2\%  & 4411.8 && 3.4\% \\
500  & 12 &&  \ccg249.0 & 10/328  &   10/10   & 0.0\%  & 0.9\%  & 23173.1    &&	  \ccg249.0 & 0/26 	 & 10/10 &  0.0\%  & 0.9\%  & 2266.2 && 0.0\% \\
     & 20 &&  \ccg312.2 & 4/4	  &    10/10  & 0.0\%  & 0.2\%  & 13761.0    &&	  \ccg312.2 & 0/0  	 & 10/10 &  0.0\%  & 0.2\%  & 1803.3 && 0.0\% \\ \hline
     & 4  &&  162.2 & 487/1893	  &    0/10   & 9.4\%  & 9.5\%  & 120795.1   &&	  \ccg159.0 & 2/2043	 & 0/10  &  7.0\%  & 7.9\%  & 5451.3 && 2.0\% \\
600  & 12 &&  291.0 & 32/1202	  &    2/10   & 0.9\%  & 1.2\%  & 33372.6    &&	  \ccg290.5 & 0/151	 & 9/10  &  0.1\%  & 1.1\%  & 2780.3 && 0.2\% \\
     & 20 &&  \ccg362.3 & 6/12	  &    10/10  & 0.0\%  & 0.3\%  & 19543.8    &&	  \ccg362.3 & 0/1  	 & 10/10 &  0.0\%  & 0.3\%  & 2253.2 && 0.0\% \\ \hline
     & 4  &&  187.7 & 785/2856    &    0/10   & 10.0\% & 10.2\% & 164116.2   &&	  \ccg183.4 & 3/1680	 & 0/10  &  7.6\%  & 7.8\%  & 6459.3 && 2.3\% \\
700  & 12 &&  331.0 & 54/1811	  &    0/10   & 1.2\%  & 1.4\%  & 45303.9    &&	  \ccg330.2 & 1/962	 & 2/10  &  0.7\%  & 1.2\%  & 3312.0 && 0.2\% \\
     & 20 &&  \ccg408.9 & 12/120  &    10/10  & 0.0\%  & 0.4\%  & 26588.5    &&	 \ccg408.9  & 0/4        & 10/10 &  0.0\%  & 0.4\%  & 2729.3 && 0.0\% \\ \hline
     & 4  &&  221.6 & 1442/3432   &    0/10   & 14.7\% & 15.3\% & 213956.1   &&	  \ccg207.1 & 5/2052	 & 0/10  &  8.9\%  & 9.4\%  & 7555.9 && 7.0\% \\
800  & 12 &&  368.7 & 123/2460	  &    0/10   & 1.6\%  & 1.8\%  & 59026.8    &&	  \ccg367.6 & 1/943	 & 0/10  &  1.1\%  & 1.5\%  & 3871.0 && 0.3\% \\
     & 20 &&  \ccg456.1 & 33/669  &    10/10  & 0.0\%  & 0.5\%  & 34451.6    &&	\ccg456.1   & 0/14       & 10/10 &  0.0\%  & 0.5\%  & 3180.1 && 0.0\% \\ \hline
     & 4  &&  266.3 & 1880/2314   &    0/10   & 22.3\% & 22.5\% & 271158.3   && \ccg227.3   & 6/2607     & 0/10  &  8.9\%  & 9.4\%  & 8682.5 && 17.2\% \\
900  & 12 &&  408.5 & 178/2406	  &    0/10   & 2.2\%  & 2.3\%  & 74372.5    &&	  \ccg405.5 & 1/1350	 & 0/10  &  1.3\%  & 1.5\%  & 4440.8 && 0.7\% \\
     & 20 &&  501.5 & 50/1625	  &    6/10   & 0.2\%  & 0.6\%  & 43543.4    &&	  \ccg501.3 & 0/238	 & 10/10 &  0.0\%  & 0.6\%  & 3649.8 && 0.04\% \\ \hline
     & 4  &&  288.7 & 3253/3739   &    0/10   & 21.8\% & 22.1\% & 334125.1   &&	  \ccg250.5 & 9/1465     & 0/10  &  10.0\% & 10.0\% & 9825.4 && 15.2\% \\
1000 & 12 &&  449.2 & 306/3147	  &    0/10   & 2.9\%  & 2.9\%  & 91955.2    &&	  \ccg443.2 & 1/1324	 & 0/10  &  1.4\%  & 1.7\%  & 5017.2 && 1.4\% \\
     & 20 &&  546.9 & 89/2182	  &    1/10   & 0.5\%  & 0.7\%  & 53736.0    &&	  \ccg546.1 & 1/844	 & 8/10  &  0.1\%  & 0.6\%  & 4106.7 && 0.1\% \\
\hline\hline
\end{tabular}}
\end{sidewaystable}

In order to facilitate the study of the computation times at which the first integer solutions were found, these times are graphically shown for different values of $|\Sigma|$ in three different barplots in Figure~\ref{fig:firstsoltime}. The charts clearly show that the advantages of $\MIPalt$ over $\MIPorig$ are considerable. In fact, the numbers concerning $\MIPalt$ are so small (in comparison to the ones concerning $\MIPorig$) that the bars are not visible in these plots. Moreover, these advantages seem to grow with increasing alphabet size. This means that, even though the differences in solution quality are negligible when $|\Sigma| = 20$, the first integer solutions are found much faster in the case of $\MIPalt$. The average gap sizes concerning the quality of the best solutions found and the best lower bounds at the time of termination are plotted in the same way in the three charts of Figure~\ref{fig:gapsize}. These charts clearly show that, for all combinations of $n$ and $|\Sigma|$, the average gap is smaller in the case of $\MIPalt$. Finally, Figure~\ref{fig:variables} shows evolution of the number of variables needed by the two models for instances of different sizes. 

\begin{figure}[!t]
\centering
\subfigure[$|\Sigma| = 4$]{
\label{fig:firstsoltime:a}
\includegraphics[angle=270,width=0.47\textwidth]{./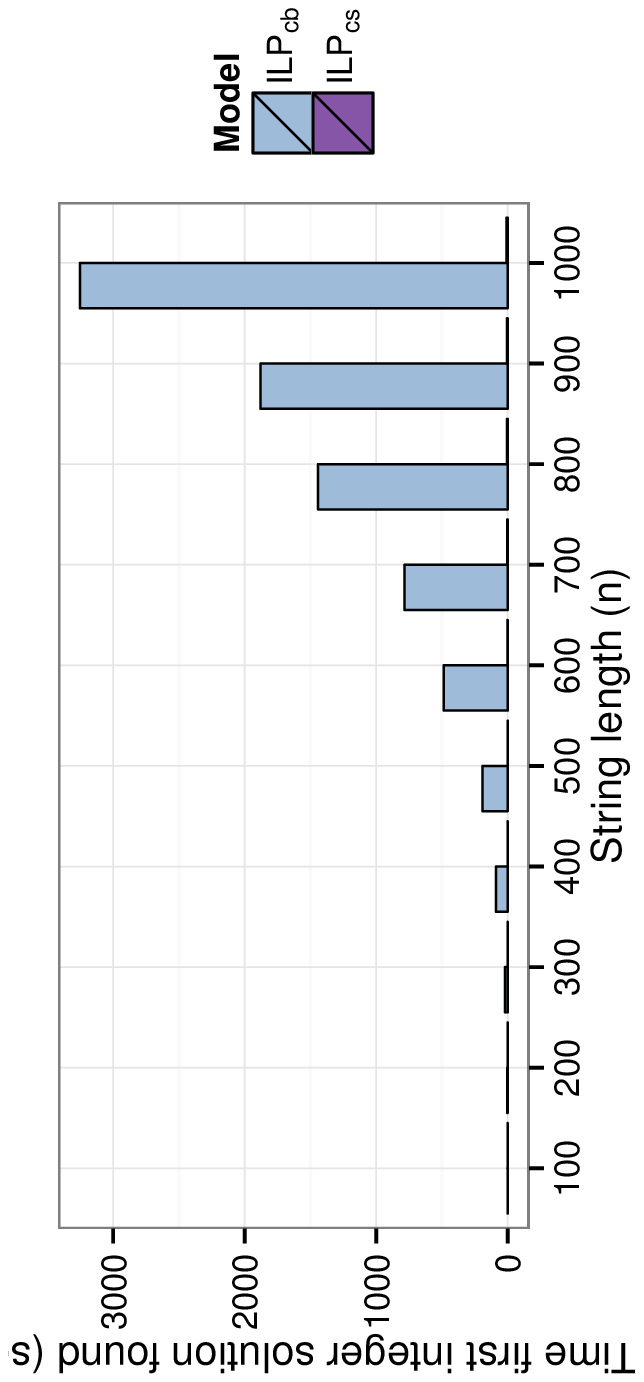}
}
\subfigure[$|\Sigma| = 12$]{
\label{fig:firstsoltime:b}
\includegraphics[angle=270,width=0.47\textwidth]{./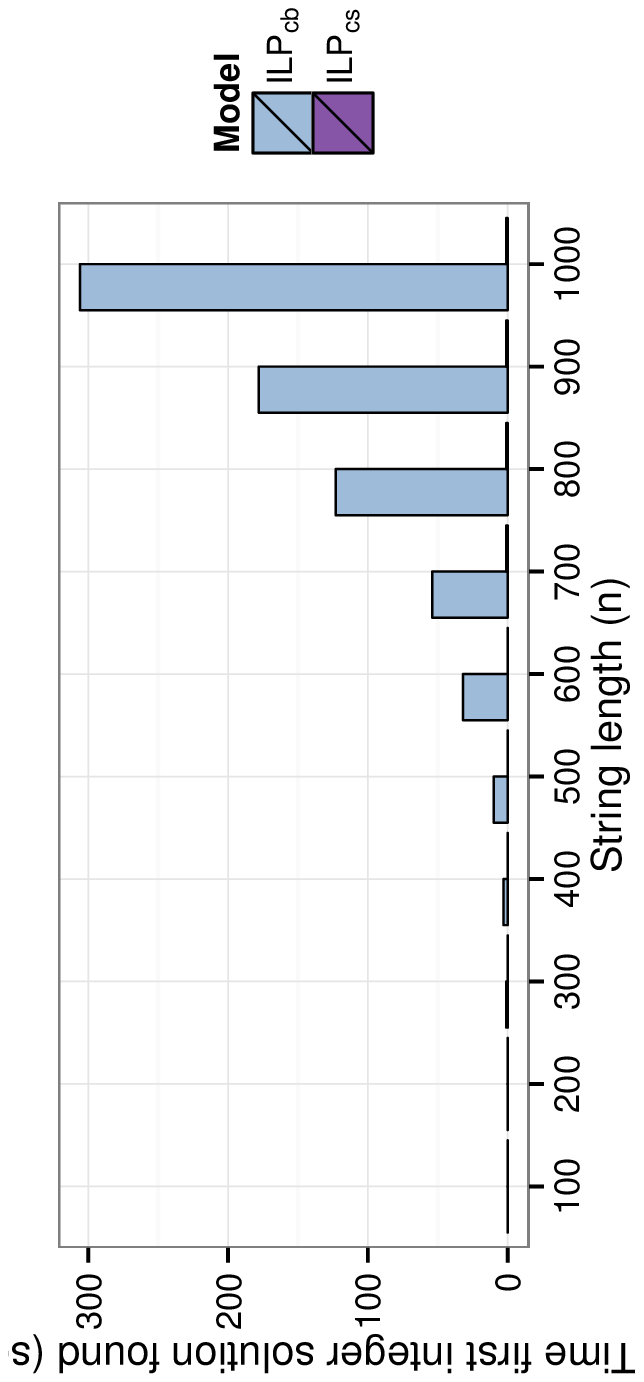}
}
\subfigure[$|\Sigma| = 20$]{
\label{fig:firstsoltime:c}
\includegraphics[angle=270,width=0.47\textwidth]{./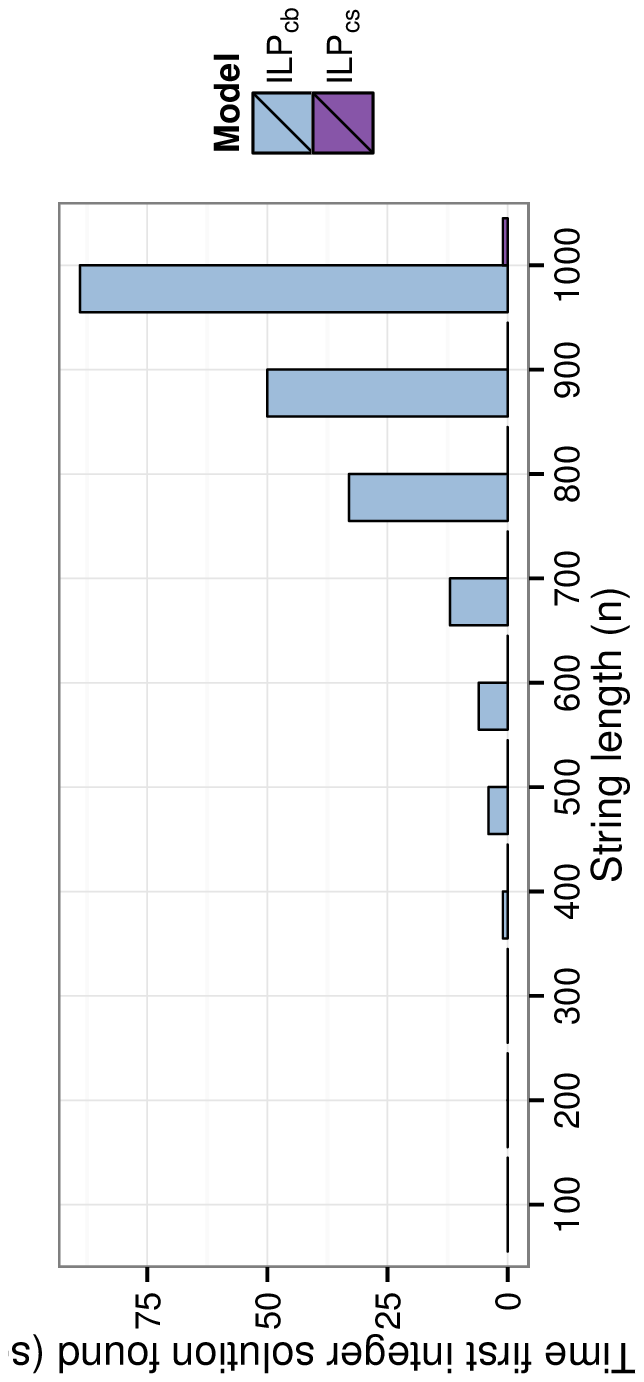}
}
\caption{Evolution of the average computation time the first integer solution is found.}
\label{fig:firstsoltime}
\end{figure}

\begin{figure}[!t]
\centering
\subfigure[$|\Sigma| = 4$]{
\label{fig:gapsize:a}
\includegraphics[angle=270,width=0.47\textwidth]{./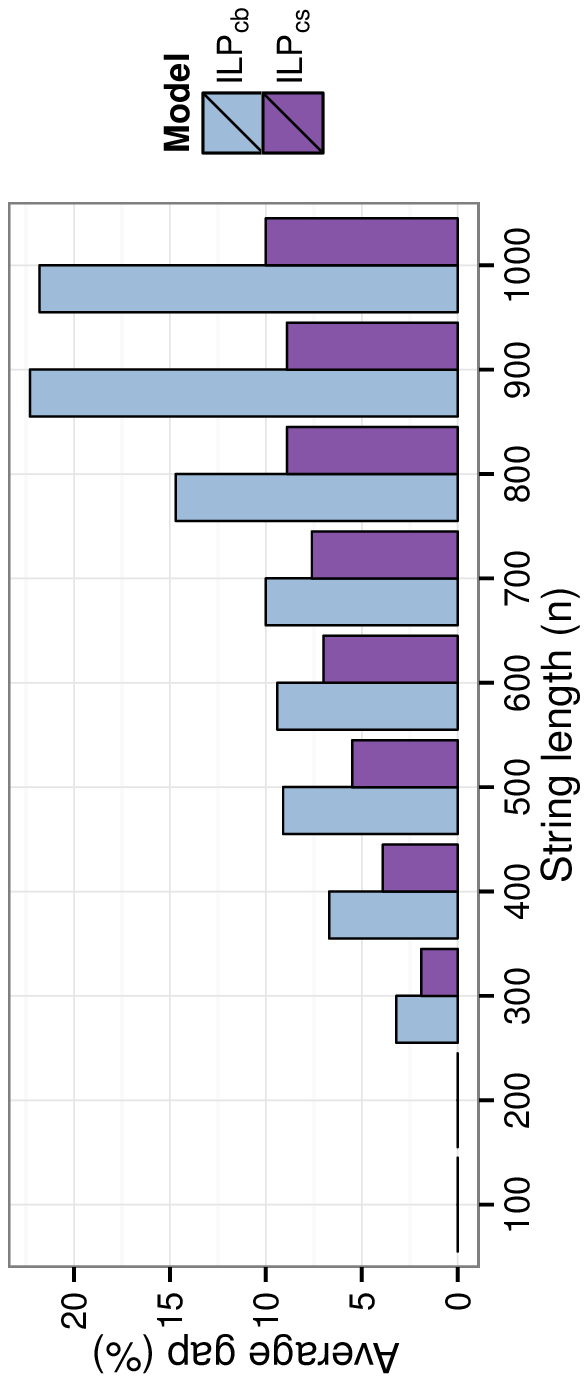}
}
\subfigure[$|\Sigma| = 12$]{
\label{fig:gapsize:b}
\includegraphics[angle=270,width=0.47\textwidth]{./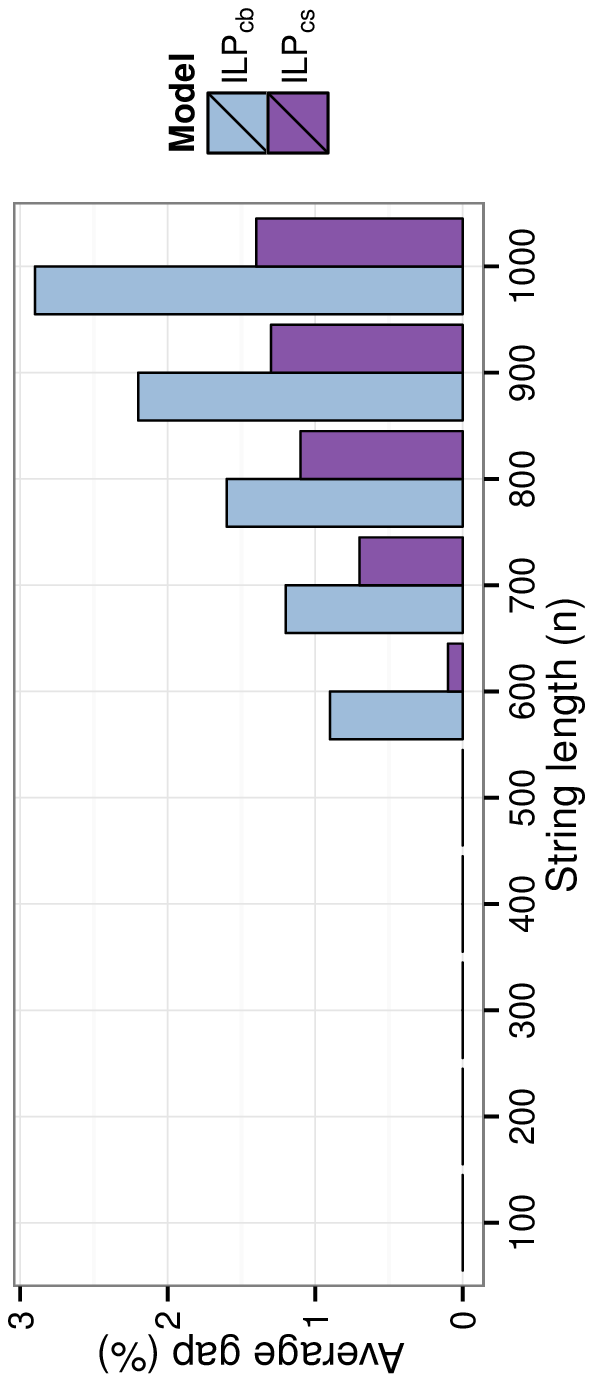}
}
\subfigure[$|\Sigma| = 20$]{
\label{fig:gapsize:c}
\includegraphics[angle=270,width=0.47\textwidth]{./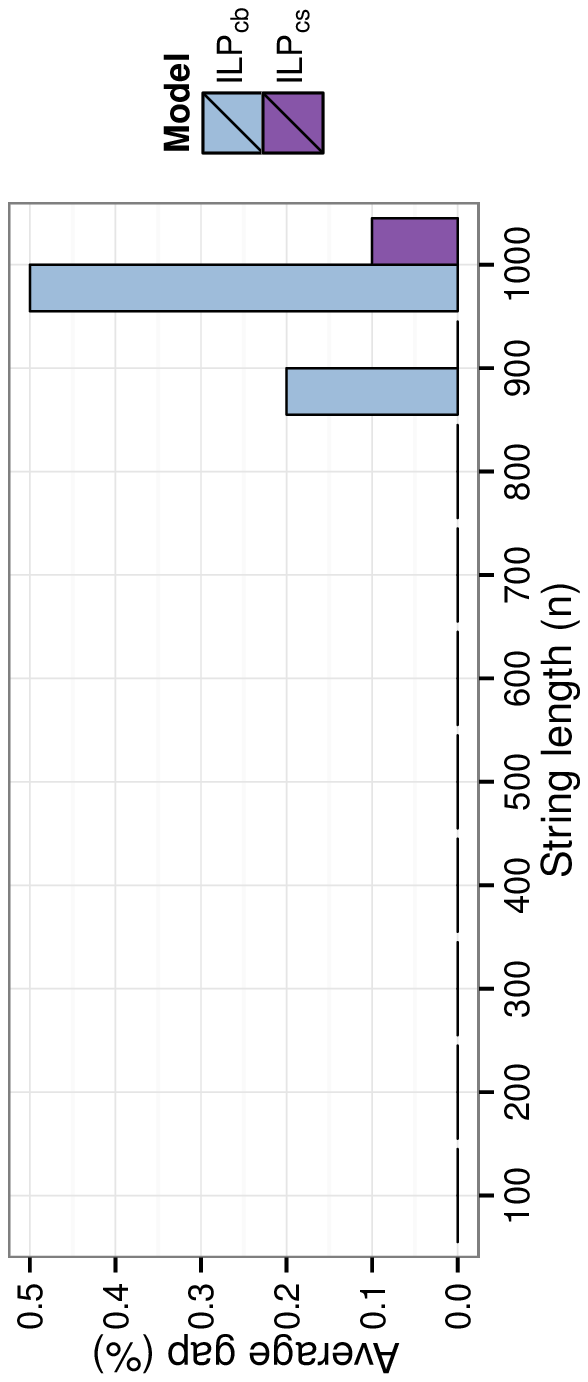}
}
\caption{Evolution of the average optimality gap size (in percent).}
\label{fig:gapsize}
\end{figure}

\begin{figure}[!t]
\centering
\subfigure[$|\Sigma| = 4$]{
\label{fig:variables:a}
\includegraphics[angle=270,width=0.47\textwidth]{./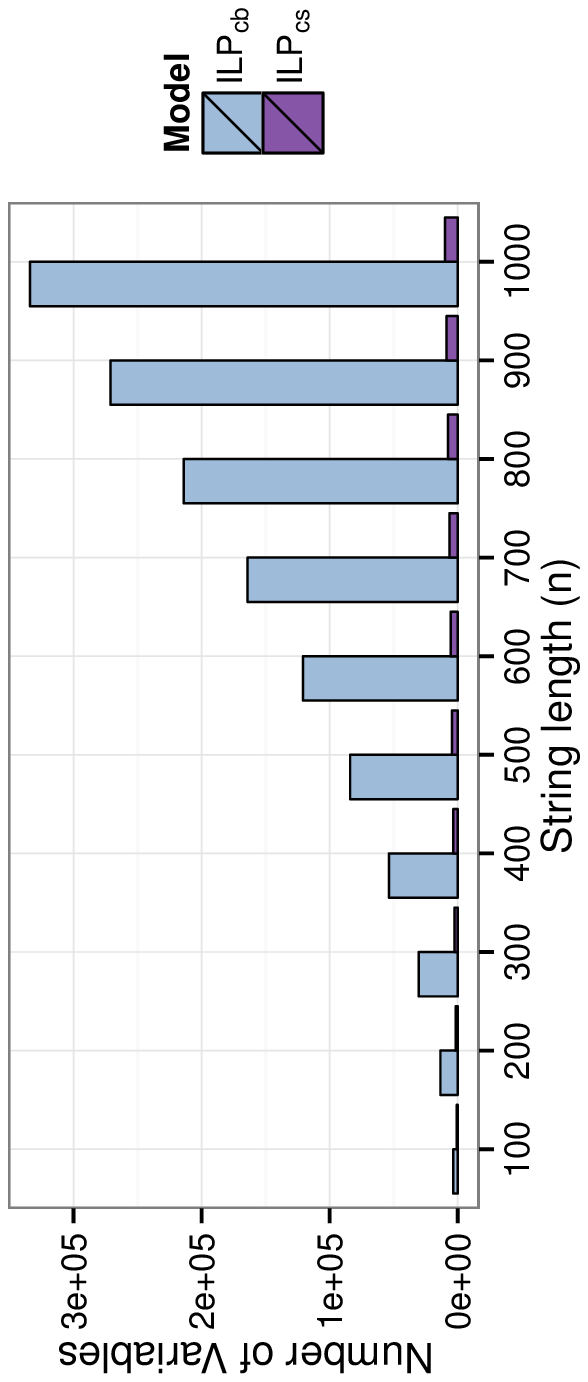}
}
\subfigure[$|\Sigma| = 12$]{
\label{fig:variables:b}
\includegraphics[angle=270,width=0.47\textwidth]{./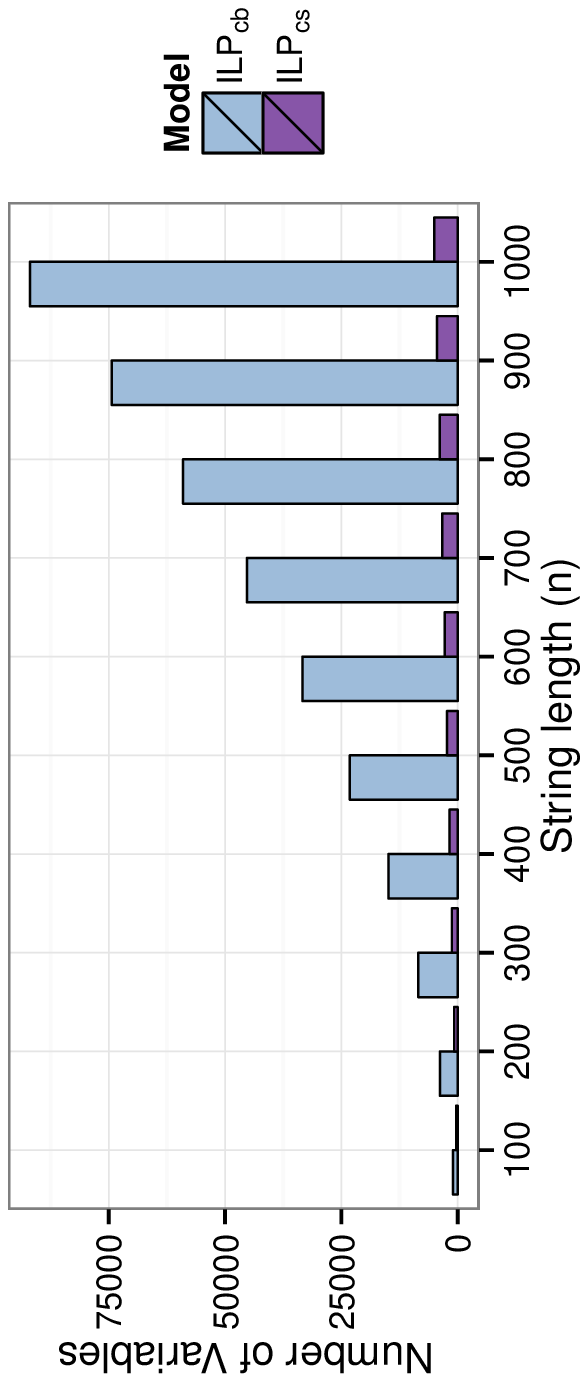}
}
\subfigure[$|\Sigma| = 20$]{
\label{fig:variables:c}
\includegraphics[angle=270,width=0.47\textwidth]{./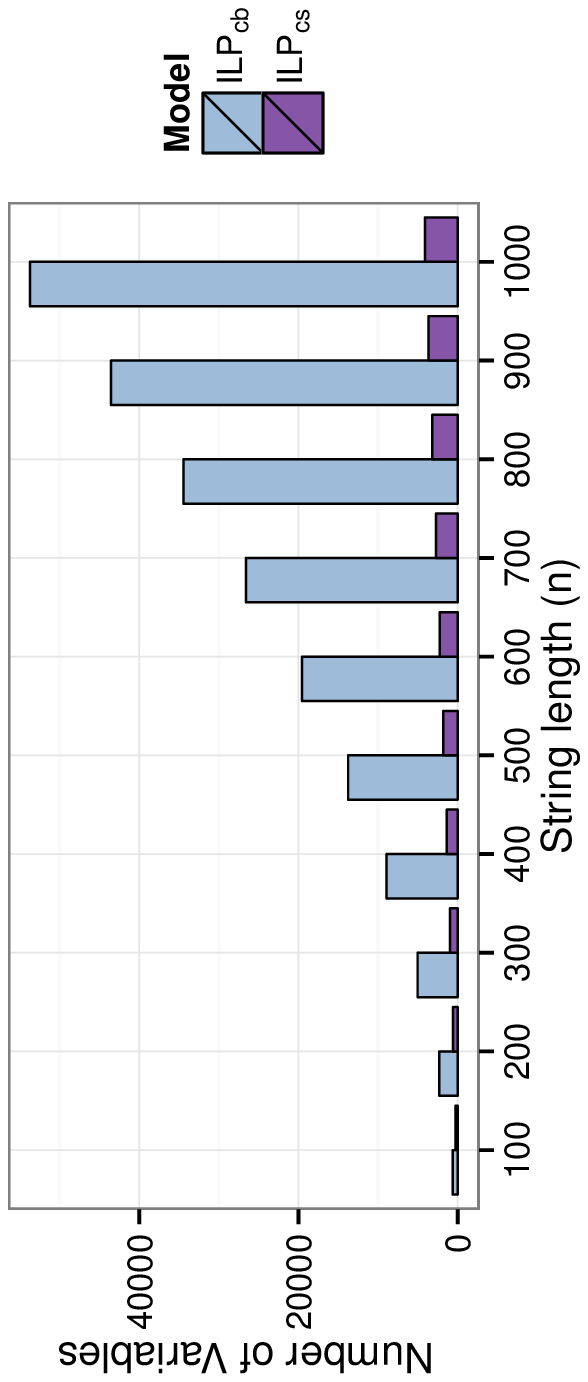}
}
\caption{Evolution of the number of variables used by the two ILP models.}
\label{fig:variables}
\end{figure}

\section{Conclusions and Future Work}
\label{sec:conclusions}

While (meta-)heuristic approaches are the state-of-the-art for
approximately solving large instances of the MCSP, instances with string
lengths of less than about 1000 letters can be well solved with 
an ILP model in conjunction with a state-of-the-art solver like CPLEX. 
In this work we have proposed the model based on \emph{common substrings} that
reduces symmetries appearing in the formerly suggested \emph{common blocks}
model. While our polyhedral analysis indicated that both models are
equally strong w.r.t.\ their linear programming relaxations, there are
significant differences in the computational difficulties to solve these
models. The new formulation allows for finding feasible solutions of
already reasonable quality in substantially less time and also yields 
better final solutions in most cases where proven optimal solutions could not 
be identified within the time limit. An important reason for this is to be found in the number of variables needed by the two models. While the existing model from the literature requires $O(n^3)$ variables (where $n$ is the length of the input strings), the new model only requires $O(n^2)$ variables.

In future work it would be interesting to consider
extended variants of the MCSP, in particular such where the input
strings need not to be related. In biological applications this would
give a greater flexibility as sequences that were also
affected by other kinds of mutations can be compared in terms of their 
reordering of subsequences. Another interesting generalization would be
to consider more than two input strings. The newly proposed ILP model 
appears to be a promising basis also for these variants.

\section*{Acknowledgements}
C.~Blum acknowledges support by grant TIN2012-37930-02 of the Spanish Government. In addition, support is acknowledged from IKERBASQUE (Basque Foundation for Science). Our experiments have been executed in the High Performance Computing environment managed by RDlab (\url{http://rdlab.lsi.upc.edu}) and we would like to thank them for their support.


\end{document}